\newtheorem{theorem}{Theorem}
\title{More can be better: An analysis of single-mutant fixation probability functions under $2\times2$ games
}
\author{Diogo L. Pires\footnote{\href{mailto:Diogo.L.Pires@city.ac.uk}{Diogo.L.Pires@city.ac.uk}} }
\author{Mark Broom\footnote{\href{mailto:Mark.Broom.1@city.ac.uk}{Mark.Broom.1@city.ac.uk}} }
\affil{\small{City, University of London, Northampton Square, London, EC1V 0HB, UK}}
\date{\today}
\begin{document}
\maketitle

\begin{abstract}
Evolutionary game theory has proved to be a powerful tool to probe the self-organisation of collective behaviour by considering frequency-dependent fitness in evolutionary processes. It has shown that the stability of a strategy depends not only on the payoffs received after each encounter but also on the population's size. Here, we study $2\times2$ games in well-mixed finite populations by analysing the fixation probabilities of single mutants as functions of population size. We proved that 9 out of the 24 possible games always lead to monotonically decreasing functions, similarly to fixed fitness scenarios. However, fixation functions showed increasing regions under 12 distinct anti-coordination, coordination, and dominance games. Perhaps counter-intuitively, this establishes that single-mutant strategies often benefit from being in larger populations. Fixation functions that increase from a global minimum to a positive asymptotic value are pervasive but may have been easily concealed by the weak selection limit. We obtained sufficient conditions to observe fixation increasing for small populations and three distinct ways this can occur. Finally, we describe fixation functions with increasing regions bounded by two extremes under intermediate population sizes. We associate their occurrence with transitions from having one global extreme to other shapes.
\end{abstract}

\section{Introduction}

Evolutionary game theory has proved to be an essential framework in the context of studying the self-organisation of collective behaviour such as signalling \cite{MaynardSmith1991Signalling,Skyrms2010Signalling,Hutteger2014Signalling}, cooperation \cite{Axelrod1984,Poundstone1992,Skyrms2004}, or aggressive territorial behaviour \cite{MaynardSmith1973,BroomRychtar2013}. The fitness of individuals in a population was first considered to depend on the frequency of their types in evolutionary models of infinite populations, under which either a static analysis of Evolutionarily Stable Strategies (ESS) \cite{MaynardSmith1973} or finding attractors of the replicator equation \cite{Taylor1978,Hofbauer1998} indicated stable solutions of the evolutionary dynamics.

Only later in \cite{Nowak2004}, the original Moran process \cite{Moran1958} was adapted to incorporate frequency-dependent fitness, thus initiating the study of evolutionary games on finite populations. Soon afterwards, it was shown that the stability of strategies depends not only on the payoffs received by individuals but on the size of the considered finite population as well \cite{Taylor2004}. For strong selection, however, the evolutionary outcomes were shown to correspond to the classical infinite population ones, first under the frequency-dependent Moran process \cite{DellaRossa2017Unifying}, and then under general fitness mappings \cite{Huang2018}.

These models introduced a new concept of evolutionary stability $ESS_N$ adapted from the definition used under infinite populations \cite{MaynardSmith1973}. An evolutionarily stable strategy became one on which selection opposes both invasion and fixation by other strategies \cite{Schaffer1988ESSN,Nowak2004}. These two requirements respectively provide equilibrium and stability conditions \cite{BroomRychtar2013} of a given strategy. Selection opposes invasion when no single mutant holds a higher fitness than the residents in an otherwise pure population. Selection opposes fixation when no single mutant is able to fixate in the population with a probability above neutral fixation \cite{Nowak2004}. It was shown that of the $16$ possible combinations of selection favouring or opposing the invasion and fixation of strategists $B$ by strategists $A$ and vice-versa, only $8$ of those are possible evolutionary scenarios under the frequency-dependent Moran process \cite{Taylor2004}.

Studying fixation probabilities thus became an important aspect of understanding the evolution of collective behaviour. They were studied as functions of the initial number of mutants \cite{deSouza2019plotshape}, and it has been proved that each of the $8$ possible evolutionary scenarios has associated to it one (and only one) of three function shapes. In  \cite{Traulsen2006,Traulsen2007}, they were studied in relation to the initial number of mutants and intensity of selection under the pairwise comparison process. They have also been extensively studied in the context of structured populations \cite{BroomHadji2010Curve,Hadji2011StarDynamics,Hindersin2016FixationGraphs,Allen2021FixationGraphWeakSelection}.

Fixation probabilities describe the likelihood of changing from one pure state to another after one mutation occurs, conditional on mutations being rare enough. Following \cite{Fudenberg2006SML}, under games with two or more strategies they can be used to construct a simplified Markov chain between pure states, and allow the computation of the long-term stationary distribution over them \cite{Hauert2007SMLSimulations,SegbroeckSantos2009Diversity,Santos2011PrePlaySignalling,Wagner2020Signalling}. The validity of these approximations under rare mutations has been analysed analytically in \cite{Wu2012SML} and through comparison with simulations in \cite{Hauert2007SMLSimulations,SegbroeckSantos2009Diversity}. The concerns raised over the existence of mixed stable states leading to quasi-stationary distributions \cite{Nasell1999QuasiStationary1,Nasell1999QuasiStationary2,Zhou2010} have been tackled in this context by considering higher-orders approximations including those states as configurations of interest \cite{Vasconcelos2017Hierarchic}.

The dependence of fixation probabilities on population size has been analysed, for instance through the calculation of their asymptotic limits. This was initially done by \cite{AntalScheuring2006}, who obtained quantitative limits for some cases, but only qualitative ones in others, whose error was later estimated by \cite{deSouza2019plotshape}. These results were later expanded for two different limit orders of weak selection by \cite{SampleAllen2017}, who also corrected some borderline cases under arbitrary values of intensity of selection.

Taking into consideration the central importance that single-mutant fixation probabilities have in the context of evolutionary theory, the present work analyses them systematically as functions of population size under the simplest $2\times 2$ games. To do so, we classify games based on the ordering of their payoff matrix' entries, leading to a total of $24$ different fixation orderings. We start by proving in section \ref{sec:fixprob_decreasing} that nine of these always lead to monotonically decreasing fixation probability functions. We tested the remaining orderings and concluded that three additional orderings may not have any non-monotonically decreasing functions. In total, these included the fixation of $6$ dominated strategies, one dominating strategy, and five strategies from coordination games.

However, increasing population size may not only change whether mutants fixate below or above neutrality \cite{Nowak2004,Taylor2004}, but can correspond to actual increases of single-mutant fixation probabilities. We observed diverse fixation functions with increasing regions under the twelve remaining orderings. These included all six strategies from anti-coordination games such as the Hawk-Dove/Snowdrift game \cite{MaynardSmith1973,BroomRychtar2013,HauertDoebeli2004,DoebeliHauert2005}, the fixation of five dominating strategies such as defectors in the Prisoner’s Dilemma \cite{Axelrod1984,Poundstone1992}, and the fixation of stag hunters in the Stag Hunt game \cite{Skyrms2001,Skyrms2004} (the only exception in coordination games). In section \ref{sec:fixprob_min}, fixation functions that increased from a global minimum to a positive asymptotic value were explored and found to be pervasive. In some anti-coordination games (e.g. fixation of doves) this shape was found every time the payoff matrix led to a positive asymptotic value. These functions have passed mostly unnoticed in the past with the exception of \cite{BroomHadji2010Curve}, where it was briefly noted that they could be observed under the Hawk-Dove game. We propose that they may have been hidden by the weak selection limit \cite{Nowak2004,Traulsen2006,Traulsen2007,Wild2007WeakSelection}, especially if this limit was considered to be dominant over the large population one \cite{SampleAllen2017}. 

In section \ref{sec:fixprob_init_increasing}, we show that it should be possible to see fixation increasing for the smallest populations $N=2$ under $6$ different orderings: three dominating, two anti-coordination, and one coordination game strategy orderings. We find three different ways this can happen: functions might increase monotonically, or they increase up to a global maximum and then decrease to a positive asymptotic value or to zero. 

Finally, in section \ref{sec:fixprob_two_extremes}, we explore fixation functions with two extremes: decreasing for small populations, increasing for intermediate populations, and decreasing again for larger populations. These were observed both with positive and zero asymptotic values. The first were observed under the fixation of two dominating strategies, and the second under anti-coordination games and the Stag Hunt game. These were all observed for transitions between functions with one global extreme and other function shapes.

\section{Model}

Let us consider a well-mixed population of individuals who interact pairwise according to a general $2\times 2$ symmetric game, and call $A$ and $B$ the strategies at their disposal. After each encounter, they receive a payoff defined by the payoff matrix from table \ref{tab:2x2_payoff_matrix}. Individuals using $A$ receive $a$ and $b$ respectively against individuals using $A$ and $B$; while individuals using $B$ receive $c$ and $d$ respectively against individuals using $A$ and $B$. We consider all payoff matrix entries to be strictly positive.

\begin{table}[h]
\centering
\begin{tabular}{c|cc}
    & $A$ & $B$ \\ \hline
$A$ & a   & b   \\
$B$ & c   & d  
\end{tabular}
\caption{Payoff matrix of a $2\times2$ game.}
\label{tab:2x2_payoff_matrix}
\end{table}

Each individual in a well-mixed population interacts on average with the same frequency with all others. In this context, their fitness is simply the average of the payoffs received over the encounters they have with other individuals in the population. Considering a population with $N$ individuals, where $i$ individuals are using strategy $A$ and $N-i$ using $B$, the fitness of individuals using $A$ and $B$ are, respectively, the following:
\begin{equation}
\label{eq:fitness_A}
  f_i^N= \frac{a (i-1)+b (N-i)}{N-1},
\end{equation}
\begin{equation}
\label{eq:fitness_B}
  g_i^N = \frac{c i + d (N-i-1)}{N-1}.
\end{equation}

However, selection may depend on factors beyond the studied game, which on average should represent equal contributions to each individual's fitness, irrespective of their strategy. To account for this, we use the fitness formulation proposed in \cite{Nowak2004}, which includes a parameter representing the intensity of selection $w\in [0,1]$:
\begin{equation}
\label{eq:fitness_A_w}
  f_i^N= 1-w+w \frac{a (i-1)+b (N-i)}{N-1},
\end{equation}
\begin{equation}
\label{eq:fitness_B_w}
  g_i^N = 1-w+w \frac{c i + d (N-i-1)}{N-1}.
\end{equation}

Changing $w$ is equivalent to performing a transformation of the original payoff matrix. However, as this transformation does not change the original game ordering, we will set $w=1$ for most of our analysis (leading back to equations \ref{eq:fitness_A} and \ref{eq:fitness_B}), except when intending to study the explicit effects of intensity of selection.

Selection may act in different ways on the evolutionary process, and hence the fitness of each individual may be considered on the first (birth) event \cite{Nowak2004,Taylor2004}, on the second (death) event \cite{Ohtsuki2006ANetworks}, on simultaneous events \cite{Pattni2017Subpopulations}, or to depend exponentially on both individuals' payoffs \cite{Traulsen2006}. Here we are focusing on the birth-death process with selection in the birth event, which is typically described as the frequency-dependent Moran process \cite{Nowak2004,Taylor2004}. This is a process under which it has been suggested that cooperators perform generally worse \cite{Ohtsuki2006ANetworks,Pattni2015EvolutionaryProcess} when compared to the results obtained under the others.

During each step of this process, one individual in the population gives birth proportionally to their fitness and another one dies randomly in the population. Taking this into consideration, for each evolutionary step, the probabilities of having the number of individuals $i$ using strategy $A$ increasing by one $\left(P_{i^+}^N\right)$, decreasing by one $\left(P_{i^-}^N\right)$, or remaining the same $\left(P_{i^=}^N\right)$ are the following:
\begin{equation}
\label{eq:transition_plus}
P_{i^+}^N = \frac{if_i^N}{if_i^N+(N-i)g_i^N} \frac{N-i}{N},
\end{equation}
\begin{equation}
\label{eq:transition_minus}
P_{i^-}^N = \frac{(N-i)g_i^N}{if_i^N+(N-i)g_i^N} \frac{i}{N},
\end{equation}
\begin{equation}
\label{eq:transition_equal}
P_{i^=}^N = 1 - P_{i^+}^N - P_{i^-}^N.
\end{equation}

All the other transitions are impossible by definition of the dynamics. This stochastic process is defined as a Markov chain, under which there are two absorbing states: $i=0$ and $i=N$. If the population falls into one of these, it will stay there for the remaining evolutionary time.

To compute the probability of one single mutant $A$ fixating in the population, i.e. seeing the system transitioning from $i=1$ to $i=N$, we have to consider a recursive relation based on equations \ref{eq:transition_plus}, \ref{eq:transition_minus} and \ref{eq:transition_equal}. Following \cite{Karlin1975Stochastic}, this relation leads to a closed-form expression for the fixation probability
\begin{equation}
\label{eq:karlin_fixation_probability}
  \rho_N=\frac{1}{1+ \sum_{j=1}^{N-1} \prod_{k=1}^j \gamma_k^N},
\end{equation}
where $\gamma_k^N$ is defined as
\begin{equation}
\label{eq:transition_prob_ration}
\gamma_k^N=\frac{P_{k^-}^N}{P_{k^+}^N}.
\end{equation}

Applying equations \ref{eq:transition_plus} and \ref{eq:transition_minus}, the term $\gamma_k^N$ reduces to the relative fitness of residents in a population with $k$ mutants and $N$ total individuals. It will prove useful to express it as an explicit function of the number of mutants $k$ and the population size $N$ when $w=1$:

\begin{equation}
\label{eq:relative_fitness_fd_moran}
  \gamma_k^N = \frac{g_k^N}{f_k^N} = \frac{c k + d (N-k-1)}{a (k-1) + b (N-k)}.
\end{equation}

\pagebreak

\section{Results}

General $2\times 2$ matrix games are often defined by the ordering of the four payoff matrix entries from table \ref{tab:2x2_payoff_matrix} (see e.g. \cite{BroomRychtar2013,Traulsen2007}). In the present work, we have only considered cases where all payoff matrix entries are positive and distinct pairwise. Even though there are $24$ possible orderings, the pairs that have $b$ and $c$ swapped as well as $a$ and $d$ correspond to the same game with the definitions of strategies $A$ and $B$ swapped. Therefore, there are a total of $12$ independent $2\times2$ games, each with a pair of complementary orderings. These pairs are shown together in table \ref{tab:orderings} and delimited by horizontal lines. In each ordering, strategy $A$ is considered to be the mutant and strategy $B$ the resident.

In infinite populations, equilibria under these games are limited to three scenarios \cite{MaynardSmith1973,Taylor1978,Hofbauer1998,Traulsen2009Stochastic}. Games either have one pure evolutionarily stable state ($ESS$), one inner $ESS$, or two pure $ESS$. These are respectively named games of dominance, anti-coordination, and coordination, and  directly correspond to three invasion scenarios in stochastic models when $N\to\infty$ \cite{Taylor2004}. The correspondence between these games and the ordering of their payoffs is set in the first two columns of table \ref{tab:orderings}. For each pair of dominance orderings on the table, the first ordering corresponds to having mutant $A$ invading $B$, while the second corresponds to $B$ invading $A$. All other pairs of complementary game orderings did not follow any particular order.

\begin{sidewaystable}[p]
\resizebox{\columnwidth}{!}{%
\begin{tabular}{c|ccccccc}
\hline
\hline
Type of game
& \multirow{2}{*}{\begin{tabular}[c]{@{}c@{}}Game\end{tabular}}
& \multirow{2}{*}{\begin{tabular}[c]{@{}c@{}}Ordering\end{tabular}}
& \multicolumn{2}{c}{Shapes of fixation functions $\rho_N$ with}
& \multirow{2}{*}{\begin{tabular}[c]{@{}c@{}}Always decreasing $\rho_N$\end{tabular}}
& \multirow{2}{*}{\begin{tabular}[c]{@{}c@{}}Possibility of $\rho_3>\rho_2$\end{tabular}} 
&  \\
under $N\to\infty$ &&&positive asymptotic value & null asymptotic value &&&\\
\hline
\multirow{12}{*}{\begin{tabular}[c]{@{}c@{}}Dominance Games\\
(one pure $ESS$) \end{tabular}}

& PD (D) & $b>d>a>c$    &DP DUP             &      &  &  & \\  
& PD (C) & $c>a>d>b$    &                   & D0 & tested    &  &  \\ \cline{2-8} 
&   & $a>c>b>d$         & DP DUP UP UDP DUDP&      &   &    yes   &  \\
&   & $d>b>c>a$         &                   & D0 & tested  &  & \\ \cline{2-8} 
&   & $a>b>c>d$         & DP DUP UP UDP DUDP &      &  &   yes     &  \\
&   & $d>c>b>a$         &                   & D0   & proven (T1+T2)  &   &  \\ \cline{2-8} 
&   & $a>b>d>c$         & DP                &      & proven (T1) &   &  \\
&   & $d>c>a>b$         &                   & D0   & proven (T1) &   &  \\ \cline{2-8} 
&  & $b>a>d>c$   & DP DUP &  &  &   &  \\
&  & $c>d>a>b$  &  & D0 & tested &   &  \\ \cline{2-8} 
&   & $b>a>c>d$ & DP DUP UP &  &   &    yes  &  \\
&   & $c>d>b>a$  & & D0  & proven (T2) &   &  \\ \hline
\multirow{6}{*}{\begin{tabular}[c]{@{}c@{}}Anti-Coordination Games\\
(one mixed $ESS$) \end{tabular}}
& HD (D) & $c>a>b>d$ & DUP & D0 DUD0 &  &  &  \\
& HD (H) & $b>d>c>a$ & DP DUP & D0 DUD0 & &  &  \\ \cline{2-8} 
&   & $b>c>a>d$ &  DUP UP &&   &   yes  &  \\
&   & $c>b>d>a$ & & D0 DUD0 &   &  &  \\ \cline{2-8} 
&   & $b>c>d>a$ &  DUP UP & D0 DUD0 &   &   yes &  \\
&   & $c>b>a>d$ &  DUP & D0 DUD0 &   &  &  \\ \hline
\multirow{6}{*}{\begin{tabular}[c]{@{}c@{}}Coordination Games\\
(two pure $ESS$)\end{tabular}} 
& SH (S)   & $a>c>d>b$ & & D0 UD0 DUD0 &  &   yes &  \\
& SH (H)   & $d>b>a>c$ & & D0 & proven (T1) &   &  \\ \cline{2-8} 
&  & $a>d>b>c$ & & D0 & proven (T1) &   &  \\
&   & $d>a>c>b$ & & D0 & proven (T1)  &   &  \\ \cline{2-8} 
&   & $a>d>c>b$ & & D0 & proven (T1)  &   &  \\
&   & $d>a>b>c$ & & D0  & proven (T1)  &   &  \\ \hline
\hline
\end{tabular}}
\caption{Summary of fixation probability results under general $2\times 2$ games. Games can be defined by the ordering of the payoff matrix's entries. There is a total of $24$ orderings, corresponding to $12$ independent $2\times2$ games with distinguishable strategies. Pairs of orderings of the same game are enclosed between horizontal lines. We split games into three types -- dominance, anti-coordination and coordination games -- depending on the equilibria observed in the asymptotic limit $N\to\infty$ under them. Orderings corresponding to known games are signalled -- Prisoner's Dilemma (PD), Hawk-Dove or Snowdrift (HD), Stag Hunt (SH) -- and the strategy considered to be mutant $A$ is indicated in parenthesis. The first/second ordering of each pair of dominance games corresponds to the fixation of the dominating/dominated strategy. Under the other two types of games, pairs did not follow any particular ordering. The shapes of fixation functions $\rho_N$ found under each of the orderings are listed, separated by the ones having positive and null asymptotic values. The last two columns summarise the results of theorems from sections \ref{sec:fixprob_decreasing} and \ref{sec:fixprob_init_increasing} respectively. Some game orderings were \textit{proven} to always have decreasing fixation functions $\rho_N$, either by Theorem $1$ (T1) or by Theorem $2$ (T2). One of the orderings resulted from a joint proof (T1+T2) for its two subsets: $bc\leq ad$ by T1 and $bc\geq ad$ by T2. We did not include the results for incomplete subsets of other orderings. Some orderings were observed to always lead to decreasing functions after being \textit{tested} systematically. The remaining orderings were found to have non-decreasing functions, as can be confirmed by the listing of shapes found under each ordering. Condition $\rho_3>\rho_2$ was met under the orderings signalled in the last column when eq. \ref{eq:condition_increasing2_4} was fulfilled.}
\label{tab:orderings}
\end{sidewaystable}

However, under finite populations, evolutionary outcomes are defined not only by whether selection favours or opposes invasion, but also by whether it does so with fixation  \cite{Taylor2004}. In the following sections, we will explore some of the possible ways in which fixation probabilities may depend on population size in $2\times2$ games. We found that there are at least $8$ shapes that single-mutant fixation probability functions $\rho_N$ can take:

\begin{enumerate}
    \item Always decreasing to a positive value (DP);
    \item Always decreasing to $0$ (D0);
    \item Decreasing to a global minimum and then increasing up to a positive value (DUP); 
    \item Always increasing up to a positive value (UP); 
    \item Increasing up to a global maximum and then decreasing to a positive value (UDP);
    \item Increasing up to a global maximum and then decreasing to $0$ (UD0); 
    \item Decreasing to a minimum, then increasing up to a maximum, and finally decreasing to a positive value (DUDP);
    \item Decreasing to a minimum, then increasing up to a maximum, and finally decreasing to $0$ (DUD0).
\end{enumerate}

We observed a higher diversity of fixation functions with increasing regions mainly under anti-coordination games (e.g. Hawk-Dove/Snowdrift game), the fixation of dominating strategies (e.g. defectors in the Prisoner’s Dilemma), and the fixation of stag hunters under the game with the same name (the only exception in coordination games). A summary of the fixation function shapes can be found in figure \ref{fig:fix_prob}. In table \ref{tab:orderings}, we have represented both the fixation function shapes observed under each game, and how the analytical results reflected on them. In section \ref{sec:fixprob_decreasing}, we prove that some orderings always have decreasing fixation probability functions, and then explore these functions (shapes 1 and 2). In section \ref{sec:fixprob_min}, we explore functions with one global minimum (shape 3), state these are pervasive across dominance and anti-coordination games and suggest that these might have been concealed by the weak selection limit. In section \ref{sec:fixprob_init_increasing}, we obtain the conditions under which it is possible to see fixation probability functions increasing for very small populations and explore the functions under which this happens (shapes 4, 5 and 6). Finally, in section \ref{sec:fixprob_two_extremes}, we study the settings under which fixation probabilities are seen to have two extremes -- a local minimum and a local maximum -- before decreasing either to a positive value (shape 7) or to zero (shape 8) and associate these with transitions from functions with one global extreme (minimum or maximum) to other shapes.

\begin{figure}
\centering

\begin{subfigure}[b]{0.49\textwidth}
\centering
\includegraphics[width=\textwidth]{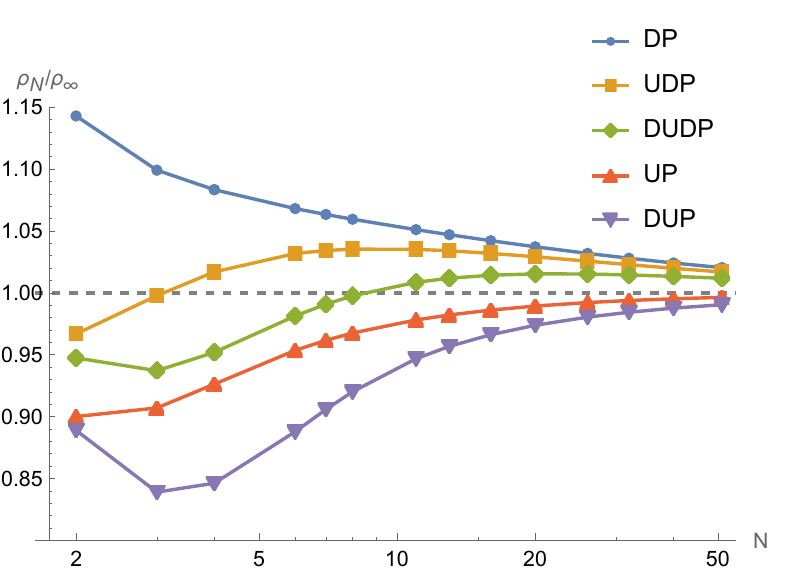}
\caption{\label{fig:fix_prob_finite_2}}
\end{subfigure}
\begin{subfigure}[b]{0.49\textwidth}
\centering
\includegraphics[width=\textwidth]{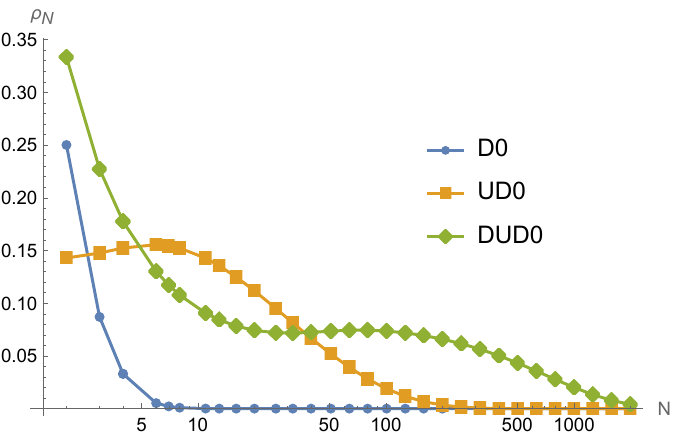}
\caption{\label{fig:fix_prob_null}}
\end{subfigure}
\caption[Summary of shapes of fixation probability functions under $2\times2$ games.]{\label{fig:fix_prob}Summary of shapes of single-mutant fixation probability functions $\rho_N$ under $2\times2$ games. Figure \ref{fig:fix_prob_finite_2} represents curves normalized by their positive asymptotic values ($\rho_{\infty}=1-d/b$) for simplification, and figure \ref{fig:fix_prob_null} represents curves with zero asymptotic values. The coding of function shapes describes their successive trends of going up (U) and down (D), and whether their asymptotic limit is positive (P) or zero ($0$). Results were obtained for the following payoff values: DP $[80,6,1,1.5]$, UDP $[20,3,1.9,1.1]$, DUDP $[8,1.8,2,0.9]$, UP $[10,3,2,1]$, DUP $[3,4,2,1]$, D0 $[2,1,3,4]$, UD0 $[13,0.5,3,1]$, DUD0 $[3,2,4,1.235]$. These payoff values were chosen to get optimal clarity of figures.}
\end{figure}

\subsection{\label{sec:fixprob_decreasing}Decreasing fixation probability functions}

Under the fixed fitness Moran process \cite{Moran1958}, increasing the size $N$ of a finite population always leads to a decrease in the probability that a single mutant has of fixating on the whole population. When frequency-dependent fitness is introduced \cite{Nowak2004}, despite this not being necessarily true, it is still observed to happen often. To probe the $2\times 2$ games under which $\rho_{N+1}>\rho_N$ is always true, we  obtain Theorems 1 and 2.

\begin{theorem}
If the payoff matrix entries of a $2\times 2$ game satisfy $b<a$ and $c\leq d$ or $b \geq a$ and $bc\leq ad$, 
then the fixation probability $\rho_N$ of a single mutant using $A$ decreases monotonically with $N$.
\end{theorem}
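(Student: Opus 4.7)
The plan is to reduce the monotonicity of $\rho_N = 1/D_N$, where $D_N = 1 + \sum_{j=1}^{N-1}\prod_{k=1}^j \gamma_k^N$, to a termwise comparison of the ratios $\gamma_k^N$ as $N$ grows by one. Specifically, I would show that under either hypothesis, $\gamma_k^{N+1} \geq \gamma_k^N$ holds for every $k \in \{1,\ldots,N-1\}$. Once this is established, each partial product $\prod_{k=1}^j \gamma_k^N$ is non-decreasing in $N$, and because $D_{N+1}$ carries the strictly positive extra summand $\prod_{k=1}^{N} \gamma_k^{N+1}$, one immediately concludes $D_{N+1} > D_N$ and hence $\rho_{N+1} < \rho_N$.

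The key calculation is a direct subtraction using eq. \ref{eq:relative_fitness_fd_moran}. Setting $u_k = ck + d(N-k-1)$ and $v_k = a(k-1) + b(N-k)$, so that $\gamma_k^N = u_k/v_k$ and $\gamma_k^{N+1} = (u_k+d)/(v_k+b)$, a cross-multiplication gives
\begin{equation*}
\gamma_k^{N+1} - \gamma_k^N = \frac{d\,v_k - b\,u_k}{v_k(v_k+b)},
\end{equation*}
and expanding the numerator collapses to $dv_k - bu_k = k(ad-bc) - d(a-b)$. Thus the termwise inequality reduces to the scalar condition $k(ad-bc) \geq d(a-b)$.

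It then remains to verify this scalar inequality under each branch of the hypothesis. Under the second branch ($b \geq a$ and $bc \leq ad$) the right-hand side is non-positive while the left-hand side is non-negative, so the inequality is immediate. Under the first branch ($b < a$ and $c \leq d$) the first thing to notice is that $a/b > 1 \geq c/d$ forces $ad > bc$, so both sides are positive; the condition is then weakest at $k = 1$, where it reads $ad - bc \geq d(a-b)$, equivalently $bd \geq bc$, i.e.\ the assumed $d \geq c$. For $k \geq 2$ the inequality follows a fortiori.

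The main obstacle, to my eye, is recognising that the hypothesis $c \leq d$ of the first branch plays a double role: it simultaneously forces $ad > bc$ and makes the critical value $k^{*} = d(a-b)/(ad-bc)$ at most $1$. Once these two algebraic facts are spotted, everything else is a short arithmetic check plus the observation that $D_N$ strictly grows because of its extra term, so no deeper structural argument is needed.
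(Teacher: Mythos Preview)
Your proposal is correct and follows essentially the same route as the paper: both reduce $\rho_N>\rho_{N+1}$ to the termwise comparison $\gamma_k^{N+1}\geq\gamma_k^N$, exploit the strictly positive extra summand $\prod_{k=1}^{N}\gamma_k^{N+1}$ to get a strict inequality on the denominators, and then collapse the termwise condition to the scalar inequality $k(ad-bc)\geq d(a-b)$ (the paper writes it as $(bc-ad)k\leq(b-a)d$), which is verified by the same two-case split on the sign of $b-a$. The only cosmetic difference is that the paper phrases the case analysis via the infimum of $(b-a)d/k$ over all admissible $k$, whereas you check the extremal $k=1$ directly; the content is identical.
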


\begin{proof}

We try to prove that  $\rho_N>\rho_{N+1}$ is always true in some games. We start by comparing the closed-form expressions for the fixation probabilities recalling equation \ref{eq:karlin_fixation_probability} and focusing on their denominators:
\begin{equation}
\label{eq:condition_non_increasing}
\begin{split}
  \rho_N>\rho_{N+1} 
  &\Leftrightarrow \frac{1}{1+ \sum_{j=1}^{N-1} \prod_{k=1}^j \gamma_k^N}> \frac{1}{1+ \sum_{j=1}^{N} \prod_{k=1}^j \gamma_k^{N+1}} \Leftrightarrow \\
  &\\
  &\Leftrightarrow  \sum_{j=1}^{N-1} \prod_{k=1}^j \gamma_k^N < \sum_{j=1}^{N} \prod_{k=1}^j \gamma_k^{N+1}.\\
\end{split}
\end{equation}

Isolating the extra term on the sum on the right hand-side and joining the remaining two sums of $N-1$ terms, we get the following condition:
\begin{equation}
\label{eq:condition_non_increasing_1_1}
  \rho_N>\rho_{N+1} 
  \Leftrightarrow
  \sum_{j=1}^{N-1} \left[ \prod_{k=1}^j \gamma_k^N - \prod_{k=1}^j \gamma_k^{N+1} \right] <
  \prod_{k=1}^N \gamma_k^{N+1}.
\end{equation}

The right hand-side product is positive because all values of $\gamma_k^{N+1}$ are positive as well. Then, if the following more strict condition is fulfilled
\begin{equation}
\label{eq:condition_non_increasing_1_2}
  \sum_{j=1}^{N-1} \left[ \prod_{k=1}^j \gamma_k^N - \prod_{k=1}^j \gamma_k^{N+1} \right] \leq 0,
\end{equation}
we will necessarily have $\rho_N>\rho_{N+1}$.

Any payoff matrix fulfilling $\gamma_k^N\leq \gamma_k^{N+1} \forall k\in\{1,...,N-1\}; \forall N\in\{2,3,...\}$, satisfies condition \ref{eq:condition_non_increasing_1_2}. Thus, single-mutant fixation probabilities in such games will necessarily be decreasing functions of $N$. Under the frequency-dependent Moran Process, $\gamma_k^N$ is equivalent to the relative fitness of resident strategy $B$ as it is seen in equation \ref{eq:relative_fitness_fd_moran}. Therefore, $\gamma_k^N\leq \gamma_k^{N+1}$ leads to the following condition in that case:
\begin{equation}
\label{eq:condition_non_increasing_1_3}
\frac{c k + d (N-k-1)}{a (k-1) + b (N-k)} \leq \frac{c k + d (N-k)}{a (k-1) + b (N-k+1)}, 
\end{equation}
which after some algebra becomes
\begin{equation}
\label{eq:condition_non_increasing_1_4}
(bc-ad)k\leq (b-a)d.
\end{equation}

We are interested in the games that satisfy this condition for all values of $k\in\{1,...,N-1\}$ and $N\in\{2,3,...\}$, which is equivalent to saying that
\begin{equation}
\label{eq:condition_non_increasing_1_5}
bc-ad\leq \inf\left( \left\{ \frac{(b-a)d}{k}:k\in\{1,...,N-1\};N\in\{2,3,...\} \right\} \right).\\
\end{equation}

The value of $k$ for which the infimum occurs depends only on whether $(b-a)d$ is negative or positive.

\begin{enumerate}

\item Case $b<a$:

This case leads to $(b-a)d<0$, and therefore the infimum on equation \ref{eq:condition_non_increasing_1_5} occurs for the minimum value that $k$ can assume $k=1$, regardless of the value of $N$. Thus, the condition becomes $bc-ad\leq (b-a)d$, which simplifies to 
\begin{equation}
\label{eq:condition_non_increasing_1_6}
c\leq d.
\end{equation}

\item Case $b\geq a$:

This case leads to $(b-a)d\geq0$, and therefore the infimum on equation \ref{eq:condition_non_increasing_1_5} occurs for the largest possible value of $k$, which will be $k=N-1$. Since $(b-a)d$ is non-negative and $k$ is unbounded, the infimum is $0$, and the condition becomes 
\begin{equation}
\label{eq:condition_non_increasing_1_7}
bc\leq ad.
\end{equation}

\end{enumerate}

All games contained in these two cases will necessarily have a fixation probability with a decreasing function on $N$.

\end{proof}

\begin{theorem}
If the payoff matrix entries of a $2\times 2$ game satisfy $c\geq d\geq b\geq a$,   or $bc\geq ad$, $c+d\geq 2b$ and $c< d$, 
then the fixation probability $\rho_N$ of a single mutant using $A$ decreases monotonically with $N$.
\end{theorem}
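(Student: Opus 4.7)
The plan is to target the same denominator comparison $\sum_{j=1}^{N-1}\prod_{k=1}^j \gamma_k^N < \sum_{j=1}^{N}\prod_{k=1}^j \gamma_k^{N+1}$ that opened the proof of Theorem 1, but with a different sufficient condition. The pointwise bound $\gamma_k^N \leq \gamma_k^{N+1}$ breaks under the present hypotheses (equation \ref{eq:condition_non_increasing_1_4} shows it fails for large $k$ as soon as $bc>ad$), so I would instead look for the \emph{shifted} pointwise bound $\gamma_k^N \leq \gamma_{k+1}^{N+1}$. Writing $\gamma_k^N = p/q$ with $p = ck + d(N-k-1)$ and $q = a(k-1) + b(N-k)$, direct inspection of the definition gives $\gamma_{k+1}^{N+1} = (p+c)/(q+a)$, so the shifted inequality is equivalent to $c/a \geq \gamma_k^N$. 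Cross-multiplying and setting $m = N-k-1 \geq 0$, this reduces to $(bc-ad)m + c(b-a) \geq 0$, which holds once $bc \geq ad$ and $b \geq a$. I would first check that both ingredients are supplied by each alternative hypothesis: they are immediate under $c \geq d \geq b \geq a$, and under $(bc \geq ad,\ c+d \geq 2b,\ c<d)$ the inequality $b \geq a$ follows because $b<a$ combined with $c<d$ would force $bc<ad$, contradicting the hypothesis.

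With the shifted pointwise bound in hand, I would multiply it over $k=1,\dots,j$ to get $\prod_{k=1}^j \gamma_k^N \leq \prod_{k=2}^{j+1}\gamma_k^{N+1} = (1/\gamma_1^{N+1})\prod_{k=1}^{j+1}\gamma_k^{N+1}$ and sum over $j=1,\dots,N-1$. Writing $S_M := \sum_{j=1}^{M-1}\prod_{k=1}^j \gamma_k^M$, so that $\rho_M^{-1}=1+S_M$ by equation \ref{eq:karlin_fixation_probability}, the summation rearranges to $S_{N+1} \geq \gamma_1^{N+1}(S_N+1)$. Provided $\gamma_1^{N+1} \geq 1$, this forces $S_{N+1} > S_N$ and hence $\rho_{N+1} < \rho_N$, which is the desired monotonicity.

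The remaining step is to verify $\gamma_1^{N+1} \geq 1$ for every $N \geq 2$. From the explicit form $\gamma_1^{M} = (c + d(M-2))/(b(M-1))$ the inequality is equivalent to $(d-b)M \geq 2d-b-c$, and the hypothesis $c+d \geq 2b$ is exactly what makes this hold at $M=3$. Since $d>b$ in both alternatives (direct in the first; in the second from $b \leq (c+d)/2 < d$, using $c<d$), the left-hand side is non-decreasing in $M$, so the inequality persists for all $M \geq 3$, i.e., for all $N \geq 2$. The comparison $\rho_1 > \rho_2$ is trivial.

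The main obstacle is identifying the right ansatz: the one-index shift from $\gamma_k^{N+1}$ to $\gamma_{k+1}^{N+1}$ is what rescues the pointwise comparison in the regime $bc \geq ad$, while the auxiliary hypothesis $c+d \geq 2b$ — absent from the shifted pointwise step itself — re-enters precisely at the final stage to absorb the spurious $\gamma_1^{N+1}$ factor produced by the reindexing. Once the correct sufficient condition has been spotted, all three algebraic verifications reduce to one-line manipulations.
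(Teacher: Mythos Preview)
Your argument is correct and follows essentially the same route as the paper: isolate the $j=1$ term on the $(N{+}1)$-side, impose the shifted pointwise bound $\gamma_k^N\le\gamma_{k+1}^{N+1}$ together with $\gamma_1^{N+1}\ge 1$, and reduce each to the stated payoff inequalities. The only cosmetic slip is that the first alternative gives $d\ge b$ rather than $d>b$, but this still makes $(d-b)M$ non-decreasing, and $c\ge b$ with $d\ge b$ already yield $c+d\ge 2b$, so the base case $M=3$ is covered there as well.
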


\begin{proof}

As in the proof of Theorem 1, we start by comparing the closed-form expressions for the fixation probabilities (eq. \ref{eq:condition_non_increasing}). We then follow an alternative path, isolating the first term ($j=1$) instead of the last one ($j=N$) in the sum with $N$ elements, and obtain the following relation:
\begin{equation}
\label{eq:condition_non_increasing_2_1}
  \rho_N > \rho_{N+1} \Leftrightarrow \sum_{j=1}^{N-1} \left[ \prod_{k=1}^j \gamma_k^N - \gamma_1^{N+1}\prod_{k=1}^j \gamma_{k+1}^{N+1} \right]<\gamma_1^{N+1}.
\end{equation}

Following the same reasoning as in the previous proof, because $\gamma_1^{N+1}$ is strictly positive, a game meeting the more strict condition
\begin{equation}
\label{eq:condition_non_increasing_2_2}
  \sum_{j=1}^{N-1} \left[ \prod_{k=1}^j \gamma_k^N - \gamma_1^{N+1}\prod_{k=1}^j \gamma_{k+1}^{N+1} \right]\leq 0,
\end{equation}
will necessarily lead to $ \rho_N > \rho_{N+1}$.

By the same reasoning used before, this includes all the payoff matrices that meet $\gamma_k^N\leq\gamma_{k+1}^{N+1}$ and $\gamma_1^{N+1} \geq 1 \forall k\in\{1,...,N-1\}; \forall N\in \{2,3,...\}$. The second part of the condition had to be added to assure that the constant coefficient on the right hand-side product did not make it smaller than the left hand-side product. Applying the form that $\gamma_k^N$ assumes under the frequency-dependent Moran process (equation \ref{eq:relative_fitness_fd_moran}), having $\gamma_k^N\leq\gamma_{k+1}^{N+1}$ and $\gamma_1^{N+1} \geq 1$ is equivalent to
\begin{equation}
\label{eq:condition_non_increasing_2_3}
\begin{dcases}
&a(c-d) \leq (bc-ad)(N-k) \\
&d-c \leq (d-b)N. \\
\end{dcases}
\end{equation}

To fulfill these equations for all  $k\in\{1,...,N-1\}$ and $N\in\{2,3,...\}$, we get the following condition: 
\begin{equation}
\label{eq:condition_non_increasing_2_4}
\begin{dcases}
&bc-ad \geq \sup\left( \left\{ \dfrac{a(c-d)}{N-k}:k\in\{1,...,N-1\};N\in\{2,3,...\} \right\} \right) \\
&d-b\geq \sup\left( \left\{ \dfrac{d-c}{N}:N\in\{2,3,... \}\right\} \right).\\
\end{dcases}
\end{equation}

Parallel to the proof of Theorem 1, the suprema in both equations depend only on the sign of the terms $a(c-d)$ and $d-c$, which are related.

\begin{enumerate}

\item Case $c<d$:

Since $a(c-d)<0$ and $N$ is unbounded, the supremum on the top equation in \ref{eq:condition_non_increasing_2_4} is $0$. Since  $d-c>0$, the one in the bottom equation is the maximum of that expression obtained when $N=2$. Therefore, condition \ref{eq:condition_non_increasing_2_4} is fulfilled if $bc-ad\geq 0$ and $d-b\geq (d-c)/2$, leading to:
\begin{equation}
\label{eq:condition_non_increasing_2_5}
bc\geq ad \hspace{.1cm}\text{,} \hspace{.3cm} c+d\geq 2b.
\end{equation}

\item Case $c\geq d$:

Since $a(c-d)\geq0$, the supremum in the top of equation \ref{eq:condition_non_increasing_2_4} occurs for $k=N-1$. Since $d-c\leq0$ and $N$ is unbounded, the supremum is simply $0$ in the bottom one. Therefore, condition \ref{eq:condition_non_increasing_2_4} turns into $bc-ad\leq a(c-d)$ and $d-b\geq 0$, which lead to:
\begin{equation}
\label{eq:condition_non_increasing_2_6}
b\geq a \hspace{.1cm}\text{,} \hspace{.3cm} d\geq b.
\end{equation}

\end{enumerate}

Any game meeting these conditions will necessarily have a fixation probability with a decreasing function on $N$. 

\end{proof}

Under Theorem 1, there are a total of $7$ orderings of the payoff matrix values represented in table \ref{tab:orderings} that always satisfy the obtained condition. The first condition includes exclusively $6$ complete orderings corresponding to the fixation of strategies from the fourth dominance game listed in table \ref{tab:orderings}, and the two last coordination games. The second condition is always satisfied by ordering $d>b>a>c$ corresponding to the fixation of hare hunters in the Stag Hunt game, and it is partially met for subsets of other $4$ dominance game orderings: $b>d>a>c$, $d>b>c>a$, $d>c>b>a$ and $b>a>d>c$.

Under Theorem 2, the first condition in the union defines exclusively the complete ordering $c>d>b>a$ corresponding to the fixation process of a dominated strategy. The second condition is only satisfied partially under the two orderings $d>b>c>a$ and $d>c>b>a$, both of which are already partially covered by the condition from Theorem 1. Joining Theorems 1 and 2, it can be easily shown that ordering $d>c>b>a$ ends up covered completely by the union of both final conditions, thus always holding decreasing fixation functions.

Additionally, if the dependence of the fixation probability on population size were considered only for large $N$, the second condition in equation \ref{eq:condition_non_increasing_2_4} would simply lead to $d\geq b$. Together with the conditions from Theorem 1, the large $N$ conditions would further cover the complete ordering $d>b>c>a$ as well.

It is worth noting that both Theorems 1 and 2 were proved by analysing the transition probability ratios $\gamma_k^N$ (eq. \ref{eq:transition_prob_ration}), which under the frequency-dependent Moran process become the relative fitness of individuals using B (eq. \ref{eq:relative_fitness_fd_moran}). It can be seen that increasing population size by one individual impacted fixation probabilities in two ways: 1) added an extra term to the sum, and 2) changed ratios $\gamma_k^N$ to $\gamma_k^{N+1}$ in the already existing terms. Contrary to what happens under these orderings, fixation probability functions might not be monotonically decreasing functions of $N$ if the sum of all terms in the denominator of equation \ref{eq:karlin_fixation_probability} decreases with $N$, despite the increase of the number of terms. As will be shown in later sections \ref{sec:fixprob_min}, \ref{sec:fixprob_init_increasing}, and \ref{sec:fixprob_two_extremes}, this is possible under cases where adding individuals to the population decreases the relative fitness $\gamma_k^N$ enough to compensate for the extra individuals that one single mutant would have to replace.

\subsubsection{Games with decreasing functions}

Focusing on the games under which fixation probabilities are decreasing functions of $N$, these might have either a zero or positive asymptotic limit, depending on the values in the payoff matrix. According to \cite{AntalScheuring2006}, dominance games have a mutant strategy holding a positive asymptotic value if $a>c$ and $b>d$, otherwise this value is zero. Looking at the entries in table \ref{tab:orderings} corresponding to these games, we may conclude that the first game ordering for every pair (i.e. the dominating strategy) always holds a positive asymptotic value (DP when decreasing), while the second (i.e. the dominated strategy) always holds a null asymptotic value (D0 when decreasing).

Decreasing fixation probability functions were observed under all dominance game orderings, both for dominating and dominated mutant strategies. Three of the orderings corresponding to the fixation of dominated strategies were proved to only have decreasing functions, 
and the remaining three were tested for a wide choice of values with no non-decreasing counter-examples found. 
On the other side, all five dominating mutants which were not proven to hold decreasing fixation functions were found to have examples of fixation probability functions which increased for some value of population size, thus holding alternative shapes.

Regarding anti-coordination games, the borderline case corresponding to switching from having a mutant strategy fixating positively in the asymptotic limit to fixating with null probability was obtained in \cite{AntalScheuring2006} and depends only on the payoff and intensity of selection values. The original expressions were redefined in \cite{SampleAllen2017} based on $I$, the definite integral of $\ln(\gamma(\alpha))$ evaluated between $0$ and $1$, which under $w=1$ becomes
\begin{equation}
\label{eq:mutual_invasion_boundary_condition}
I=\ln \left(\dfrac{c^{\frac{c}{c-d}}  b^{\frac{b}{a-b}}} { d^{\frac{d}{c-d}}  a^{\frac{a}{a-b}} }\right).
\end{equation} 
The function $\gamma(\alpha)$ is the relative fitness $\gamma_k^N$ (eq. \ref{eq:relative_fitness_fd_moran}) under $N\to \infty$, which depends only on the fraction of mutants $\alpha=k/N$ instead of $k$ and $N$ independently.

It was shown that the general borderline case is obtained under $I=0$. If the parameterization of intensity of selection $w$ is considered (see eqs. \ref{eq:fitness_A_w} and \ref{eq:fitness_B_w}), and the weak selection limit taken after the large population limit in exactly this order, condition $I=0$ is equivalent to $a+b=c+d$. Asymptotic fixation probabilities in that case can be represented as first-order terms in $w$ under $a+b\geq c+d$, but only as higher-order terms for $a+b<c+d$.

We observed that under $I>0$, i.e. when the asymptotic fixation probability is null, it is always possible to see decreasing fixation functions (D0) under these games. Otherwise, under $I<0$, i.e. when the mutant strategy fixates positively in the asymptotic limit, the fixation of hawks under the Hawk-Dove game represented the only one of those game orderings which could have strictly decreasing functions (DP) for some choices of the payoff matrix. The remaining $4$ anti-coordination orderings that satisfied $I<0$ for some of their subsets, e.g. fixation of mutant doves under the Hawk-Dove game, were tested systematically and never observed to decrease monotonically to a positive asymptotic value. In section \ref{sec:fixprob_min}, we propose an explanation for why fixation function shapes beyond the decreasing one have not been referred to in most of the previous literature, with the exception of \cite{BroomHadji2010Curve}. 

Under coordination games, asymptotic fixation values are always null independent of the strategy considered as the mutant, as noted by \cite{AntalScheuring2006}. Here we add that five of these orderings (two pairs of complementary orderings and the one representing hare hunters' fixation under the Stag Hunt game), were proven to always have strictly decreasing fixation probabilities (D0). Regarding the other ordering, which corresponds to stag hunters' fixation under the Stag Hunt game, there were striking examples of alternative fixation probability functions which will be mentioned in sections \ref{sec:fixprob_init_increasing} and \ref{sec:fixprob_two_extremes}.

\subsection{\label{sec:fixprob_min}Fixation probability functions with one minimum}

Under some of the explored $2\times 2$ games, there were alternatives to the way the fixation probability of a single mutant depended on population size $N$. One common pattern found in these functions across anti-coordination games and the fixation of dominating strategies started with the occurrence of a plunge for small population sizes until a global minimum of the fixation probability was reached under $N_{min}$. This was followed by a steady increase up to a positive asymptotic value which could be computed following \cite{AntalScheuring2006}.

This type of dependence was not observed under any of the $6$ coordination game orderings nor the other $6$ orderings referring to the fixation of dominated strategies. It was so because all of these fixation probabilities have null asymptotic values. On the other hand, it was sometimes observed under the fixation of dominating strategies, such as $b>d>a>c$, $a>c>b>d$, $a>b>c>d$, $b>a>d>c$, and $b>a>c>d$. This included the fixation of defectors under the Prisoner's Dilemma (e.g. under payoffs $[2,4,1.9,2.1]$). The occurrence of this dependence under this set of games is absent from the previous literature.

Nonetheless, it was  under anti-coordination games that we found this behaviour to be pervasive. 
This was present under all $5$ anti-coordination game orderings which allowed positive asymptotic values. As already noted in section \ref{sec:fixprob_decreasing}, $4$ out of these $5$ anti-coordination game orderings didn't accommodate decreasing functions of population size, suggesting that the existence of increasing regions is the norm rather than the exception under these games.

Despite being pervasive under these games, which have been thoroughly studied in the past \cite{MaynardSmith1973,HauertDoebeli2004,DoebeliHauert2005,BroomRychtar2013,Broom2012Framework,Broom2015AInteractions}, this effect did not receive  much attention, with the exception of \cite{BroomHadji2010Curve}. A significant amount of approaches to games in finite populations have considered the weak selection limit \cite{Traulsen2006,Wild2007WeakSelection,SampleAllen2017} and show that when selection tends to zero fast enough (when compared to the increase of population size), we should expect fixation probability functions to always decrease asymptotically. Motivated by this, we have looked at the impact of decreasing intensity of selection on the fixation probability functions, represented in figure \ref{fig:fixprob_DUP_nw}. For a fixed choice of payoffs $[a,b,c,d]$, decreasing the intensity of selection $w$ pushes the population size from which fixation starts to increase $N_{min}$ to larger values. In fact, they are approximately inversely proportional, $N_{min}(w)\sim 1/w$ (see appendix \ref{appendix:weak_selection}).

\begin{figure}[ht]
\centering
\includegraphics[width=0.65\textwidth]{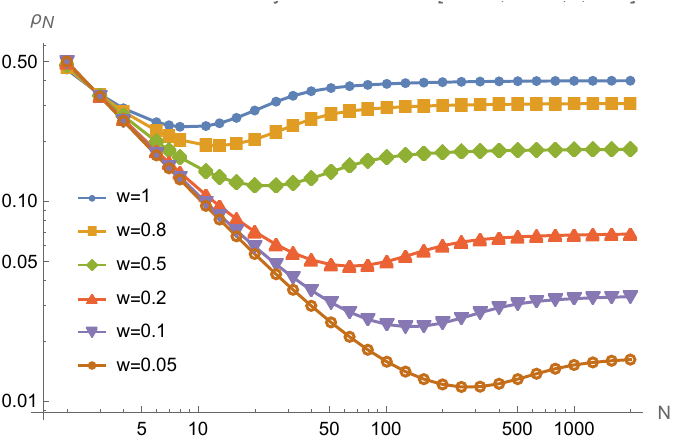}
\caption[Single-mutant fixation probability functions and the effect of weak selection (low intensity of selection $w$) on pushing minima under shape DUP to higher population sizes $N$.]{\label{fig:fixprob_DUP_nw}
Single-mutant fixation probability functions and the effect of weak selection on moving global minima under shape DUP to higher population sizes. The fixation probabilities shown were obtained for the fixation of doves under the Hawk-Dove game (HD) with $[5.5,5,6,3]$.}
\end{figure}

These results suggest that even if we consider an arbitrarily small value of the intensity of selection (i.e. weak selection limit), we may still see fixation probabilities increase with population size for large enough populations $N>N_{min}(w)$ (i.e. in the large population limit). This should not contradict the results obtained in \cite{SampleAllen2017}, since in the scenario we are describing the large population limit should dominate over the weak selection one. Additionally, because turning points $N_{min}(w)$ become very large under weak selection, fixation probabilities could become less representative of what happens under anti-coordination games in finite populations. Even though pure states are absorbing ones, when mutations are considered the population might spend most of the evolutionary time in particular transient states \cite{AntalScheuring2006,Vasconcelos2017Hierarchic}, which are called quasi-stationary states in those cases \cite{Zhou2010,Overton2022QuasiStationarySIS,Nasell1999QuasiStationary1,Nasell1999QuasiStationary2}. In appendix \ref{appendix:weak_selection}, we show that average conditional fixation times increase polynomially with $N_{min}(w)$ as they do with $N$ under neutral fixation, instead of increasing  exponentially as is predicted for anti-coordination games under a fixed intensity of selection \cite{AntalScheuring2006}. These results highlight why it is essential to beware of the impact of the order in which the weak selection limit $w\to 0$ and the large population limit $N\to \infty$ are considered under evolutionary dynamics, as it has been clearly stated in \cite{SampleAllen2017}.

\subsection{\label{sec:fixprob_init_increasing}Fixation probability functions increasing under small populations}

Under a restricted number of games of all three types (dominance, anti-coordination, coordination), fixation probabilities were observed to increase with population size $N$ for small populations. The particular case where $\rho_3>\rho_2$ holds is an extreme one where this may happen. However, the condition represented by that is simple enough to allow an analytical approach. 

\begin{theorem}
If the payoff matrix entries of a $2\times 2$ game satisfy $c>d$ and $a+b>2c(c+d)/(c-d)$, then $\rho_3>\rho_2$.
\end{theorem}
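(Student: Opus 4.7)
The approach is direct: use the closed-form expression for $\rho_N$ in equation \ref{eq:karlin_fixation_probability} and the formula for $\gamma_k^N$ in equation \ref{eq:relative_fitness_fd_moran} to compute $\rho_2$ and $\rho_3$ explicitly, then simplify the inequality $\rho_3 > \rho_2$ until it coincides with the stated hypothesis.

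First I would evaluate the ratios. For $N=2$ the only relevant ratio is $\gamma_1^2 = c/b$, giving
\begin{equation*}
  \rho_2 = \frac{1}{1+\gamma_1^2} = \frac{b}{b+c}.
\end{equation*}
For $N=3$ there are two ratios, $\gamma_1^3 = (c+d)/(2b)$ and $\gamma_2^3 = 2c/(a+b)$, so that
\begin{equation*}
  \rho_3 = \frac{1}{1+\gamma_1^3+\gamma_1^3\gamma_2^3} = \frac{1}{1+\dfrac{c+d}{2b}+\dfrac{c(c+d)}{b(a+b)}}.
\end{equation*}

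Next I would transform the inequality $\rho_3 > \rho_2$. Inverting both sides (which is valid because the denominators are positive sums of positive terms) and subtracting $1$ gives $c/b > (c+d)/(2b) + c(c+d)/(b(a+b))$. Clearing $b$ and rearranging yields $(c-d)/2 > c(c+d)/(a+b)$. At this point the hypothesis $c>d$ enters crucially: it lets me multiply through by $(a+b)$ and divide by $(c-d)/2$ without flipping the sign, obtaining
\begin{equation*}
  a+b > \frac{2c(c+d)}{c-d},
\end{equation*}
which is exactly the stated condition. Since each manipulation is reversible under the standing positivity assumption on the payoff entries together with $c>d$, the equivalence is established in both directions, and in particular the hypothesis implies $\rho_3 > \rho_2$.

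I do not expect any serious obstacle here: the theorem is really a small closed-form computation with $N$ so tiny that everything can be written out explicitly. The only point requiring a bit of care is tracking the sign of $c-d$ when dividing, and noting why the hypothesis $c>d$ is not just sufficient but essentially needed to make the final inequality meaningful (if $c \leq d$, the intermediate inequality $(c-d)/2 > c(c+d)/(a+b)$ has a non-positive left-hand side and a positive right-hand side and so cannot hold). I would mention this briefly as a remark after the proof to make clear why the two clauses of the hypothesis work together.
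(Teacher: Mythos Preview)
Your proof is correct and follows essentially the same route as the paper's: both write $\rho_2$ and $\rho_3$ via equation~\ref{eq:karlin_fixation_probability}, reduce $\rho_3>\rho_2$ to the inequality $\gamma_1^3(1+\gamma_2^3)<\gamma_1^2$, and then substitute the explicit ratios from equation~\ref{eq:relative_fitness_fd_moran} to reach $a+b>2c(c+d)/(c-d)$ under $c>d$. Your version is in fact a bit more explicit, carrying out the intermediate algebra the paper skips and noting that the equivalence also shows $c>d$ is necessary rather than merely sufficient.
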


\begin{proof}

The fixation probability values under these population sizes are the following:
\begin{equation}
\label{eq:condition_increasing2_1}
  \rho_2=\frac{1}{1+ \gamma_1^2},
\end{equation}
\begin{equation}
\label{eq:condition_increasing2_2}
  \rho_3=\frac{1}{1+ \gamma_1^3 + \gamma_1^3 \cdot \gamma_2^3}.
\end{equation}

These equations lead to the equivalence
\begin{equation}
\label{eq:condition_increasing2_3}
\rho_3 > \rho_2 \Leftrightarrow
\gamma_1^3 + \gamma_1^3 \cdot \gamma_2^3< \gamma_1^2 \Leftrightarrow \gamma_1^3 \cdot  \left(1 + \gamma_2^3\right)< \gamma_1^2.
\end{equation}

Applying the definition of $\gamma_k^N$ under the frequency-dependent Moran process (equation \ref{eq:relative_fitness_fd_moran}) on equation \ref{eq:condition_increasing2_3}, we get that the following is a sufficient condition:
\begin{equation}
\label{eq:condition_increasing2_4}
  a+b > \lambda \hspace{0.1cm} \text{and} \hspace{0.1cm} c>d,
\end{equation}
where we have used $\lambda$, defined as
\begin{equation}
\label{eq:condition_increasing2_definition}
 \lambda = 2 \hspace{2mm}\frac{c\hspace{1mm}(c+d)}{c-d}.
\end{equation}

\end{proof}

This condition is only possible under $6$ out of the $24$ possible payoff orderings. There are twelve orderings where $c>d$ but only in six of them it is possible to have $a+b>\lambda$. These are the ones where $c>d$ and either $a$ or $b$ is the largest entry in the payoff matrix. These orderings are itemised in the last column of table \ref{tab:orderings}.

Condition \ref{eq:condition_increasing2_3} highlights the conflict of having one extra individual in a population. One more resident leads to mutants having another individual to replace. Thus the presence of an extra term on the left hand-side of equation \ref{eq:condition_increasing2_3}. In order to observe fixation probabilities increasing from population size $N=2$ to $N=3$, the relative fitness of residents needs to be much lower for higher proportions of them in the population.

There was a total of three shapes of fixation probability functions found that met $\rho_3>\rho_2$. Fixation probabilities might increase monotonically up to a positive asymptotic value (UP), they might increase up to a maximum value and then decrease down to a positive value (UDP), or they might increase up to a maximum and then decrease to zero (UD0). These three shapes are represented in the summary presented in figure \ref{fig:fix_prob}.

Three of the orderings that satisfy this condition correspond to the fixation of dominating strategies. This means that the fixation probability's asymptotic value under all of these cases is positive. Under the two orderings $a>c>b>d$ and $a>b>c>d$ both UP and UDP were found depending on the particular parameter choices, while under $b>a>c>d$ only UP was found.

Under the two anti-coordination game orderings which are able to satisfy these conditions, the only shape found from these three was UP. 
Systematic checks suggest that equation  $I>0$ is either not satisfied at all, or at least at the same time as equation \ref{eq:condition_increasing2_4} under those two orderings. 
This means that when fixation probabilities increase from $N=2$ to $N=3$ under those two orderings, we should only observe functions with positive asymptotic fixation probabilities (i.e. UD0 should never be observed there).

Fixation probability functions with that shape -- having one maximum and then tending to zero (UD0) -- were found only for the fixation of stag hunters under the Stag Hunt game. This is the case exhibited in figure \ref{fig:fix_prob_null}. 
As suggested by the tests mentioned in the previous paragraphs, there was no other ordering where UD0 was observed, therefore establishing the particularity of this game and justifying further interest in its study.

\subsection{\label{sec:fixprob_two_extremes}Fixation probability functions with two extremes}

Under orderings corresponding to games of all three types, fixation probability functions $\rho_N$ were sometimes observed to decrease with population size $N$ for small populations, have an increasing region for intermediate population sizes and finally decrease again under large populations. These increasing regions were necessarily delimited by two local extremes, one minimum and one maximum.





Under anti-coordination games, these regions were observed only when asymptotic values were null (DUD0) and occurred when the choice payoffs led to a positive value of $I$ very close to zero (see equation \ref{eq:mutual_invasion_boundary_condition}). 
If $I>0$ then a mutant using $A$ has a null probability of  fixation in the limit $N\to \infty$. However, close to the border $I=0$ and for small enough populations, they may fixate similarly as if they were on the negative side. In the particular case in figure \ref{fig:DUD0}, $N = 200$ is enough for $I\leq 10^{-3}$. The occurrence of two extremes and the increasing regions to which they lead present as transitional features between the decreasing shape D0 (seen under a high enough positive $I$), and the asymptotically positive DUP (seen under a negative $I$). Additionally, as we get further closer to $I\to0^+$, we see the positive values of $\rho_N$ breaking down to zero only for larger and larger population sizes.

\begin{figure}[H]
\centering
\includegraphics[width=0.55\textwidth]{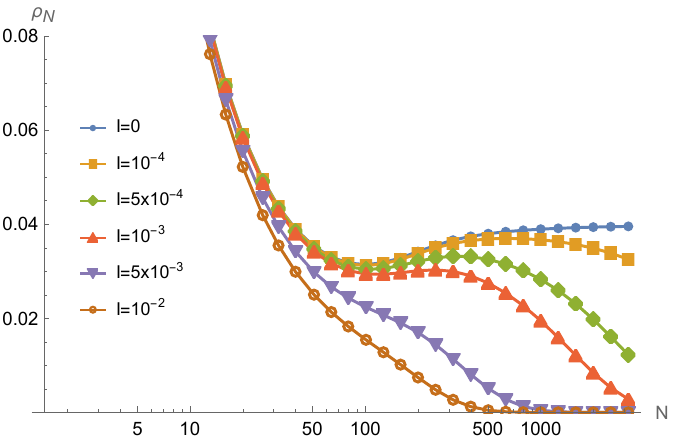}
\caption[Single-mutant fixation probability functions transitioning from having shape DUP to D0 through DUD0 under Hawk-Dove game.]{\label{fig:DUD0}
Single-mutant fixation probability functions transitioning from having shape DUP ($I=0$) to D0 ($I=\{5\times 10^{-3}, 10^{-2}\}$) through DUD0 ($I=\{10^{-4}$, $5\times 10^{-4}, 10^{-3}\}$). Fixation probabilities were obtained for games with payoff parameters $[5.5,5,6,d]$ and population size $N$. The values of $d$ were calculated from each chosen value of $I$ (eq. \ref{eq:mutual_invasion_boundary_condition}). Their choice maintains the game under ordering $c>a>b>d$, i.e. fixation of doves under Hawk-Dove game (HD), and reproduces the approach of limit $I\to0^+$. The approximate values of $d$ are, in order, $d=4.530$ ($I=0$), $d=4.531$, $d=4.535$, $d=4.540$, $d=4.581$, and $d=4.631$.}
\end{figure}

This shape was also observed in the context of the fixation of stag hunters in the Stag Hunt game when function shapes transitioned between D0 and UD0, i.e. between being always decreasing and having a global maximum (instead of a global minimum as in the previous case). This is shown in figure \ref{fig:DUD0_stag}.

\begin{figure}[h!]
\centering
\includegraphics[width=0.55\textwidth]{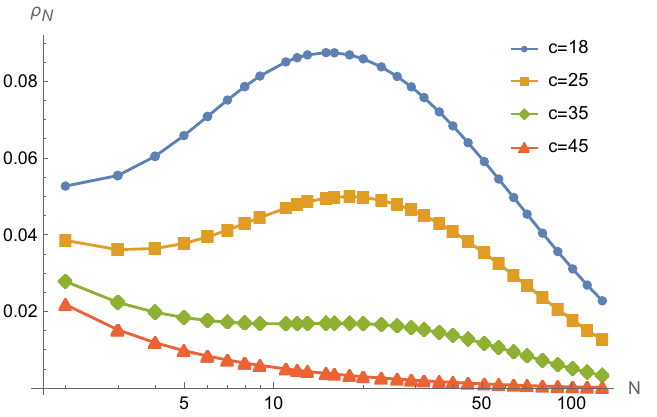}
\caption[Single-mutant fixation probability functions transitioning from having shape UD0 to D0 through DUD0.]{\label{fig:DUD0_stag}Single-mutant fixation probability functions transitioning from having shape UD0 ($c=18$) to D0 ($c=45$) through DUD0 ($c=\{25,35\}$). Fixation probabilities were obtained for games with payoff parameters $[50,1,c,2]$ and population size $N$. The values of $c$ were chosen as to always being under ordering $a>c>d>b$ and allowing condition $\rho_3>\rho_2$ (eq. \ref{eq:condition_increasing2_4}) to be met only for the lowest value $c=18$.}
\end{figure}

A parallel fixation probability function having an intermediate population size increasing region between two extremes, but with a positive asymptotic value (DUDP) was observed under the fixation of two dominating strategies. This can be seen both in the summarising figure \ref{fig:fix_prob_finite_2} and in figure \ref{fig:DUDP}. This shape was observed under orderings $a>c>b>d$ and $a>b>c>d$, as a transition between functions with one global minimum (DUP) and functions which start increasing up to a global maximum and then decrease to a positive value (UDP). Those are the only two orderings that may satisfy $\rho_{3}>\rho_{2}$ by taking the form of UDP.

\begin{figure}[ht]
\centering
\includegraphics[width=0.6\textwidth]{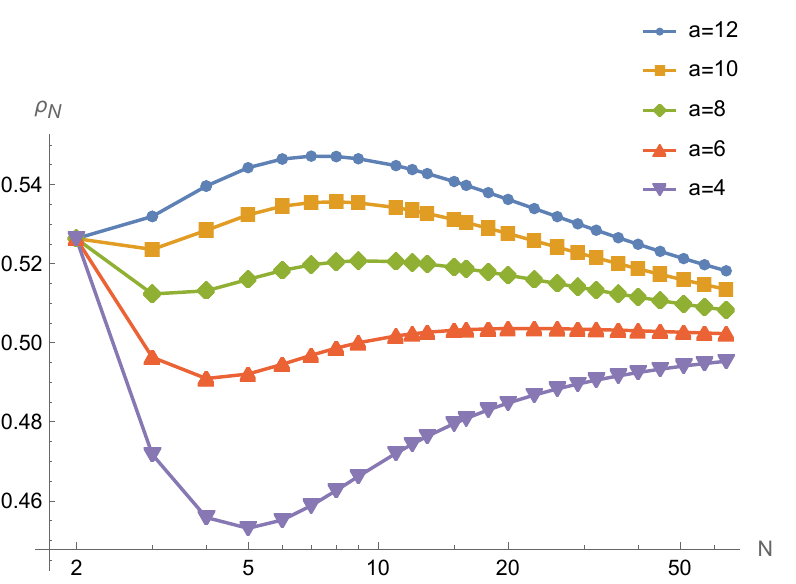}
\caption[Single-mutant fixation probability functions transitioning from having shape UDP to DUP through DUDP.]{\label{fig:DUDP}Single-mutant fixation probability functions transitioning from having shape UDP ($a=12$) to DUP ($a=4$) through DUDP ($a=\{10,8,6\}$). Fixation probabilities were obtained for games with payoff parameters $[a,2,1.8,1]$ and population size $N$. The values of $a$ were chosen as to always being under ordering $a>b>c>d$ and allowing condition $\rho_3>\rho_2$ (eq. \ref{eq:condition_increasing2_4}) to be met only for the highest value $a=12$.}
\end{figure}

These results suggest that the shapes DUDP and DUD0 having two extremes delimiting an increasing intermediate population region occur for transitions between fixation functions with one global extreme, either a maximum or a minimum, and other function shapes. Functions with one global extreme are characterized by having different trends (increasing/decreasing) for small and large population sizes. If incrementally changing the payoff matrix values switches the increasing trend to a decreasing one, an increasing region may emerge under intermediate populations.

\section{Discussion}

Understanding fixation processes and the probability of their occurrence is essential to the analysis of evolutionary games on finite populations. In this context, \cite{Taylor2004} observed that even if individuals are interacting via the same game, the size $N$ of the population where they are included could lead to different evolutionarily stable strategies $ESS_N$ as defined in \cite{Nowak2004}. 
This is an emerging feature of evolutionary processes under frequency-dependent fitness since stability under the fixed fitness Moran process never depends on population size. 
Here, we provided an extensive analysis of the dependence of single-mutant fixation probabilities on population size, focusing on the simplest $2\times 2$ games in well-mixed populations.

We have proved that under nine game orderings, fixation probabilities should be strictly decreasing functions. 
Under three other game orderings, extensive tests were performed without finding any other form of dependence beyond this one. 
Altogether, these games included the fixation of all six dominated strategies, one dominating strategy, and almost all strategies under coordination game orderings (five out of six).

However, a lot of interesting dependencies on population size $N$ arise in the remaining twelve game orderings, under which single-mutant fixation functions were observed to have intervals of population sizes for which they increased with $N$. As counter-intuitive as this may seem, populations having extra individuals whom a single mutant would have to replace during fixation might actually increase that mutant's chances of fixating. 

Under large populations, this happened when the mutant's relative fitness was higher under lower proportions of its type in the population, i.e. when mutants did relatively better among residents than they did among their own type: $d/b<c/a$. In this limit, the selection dynamics are completely characterized by these two ratios, as it was already noted in \cite{Taylor2004}. Increasing population size guaranteed that both mutants and residents interacted more frequently with residents in the initial steps of the fixation process. Therefore, the increase in the number of resident individuals to be replaced by one mutant was offset by the higher replacement probabilities.

Surprisingly, fixation functions were observed to increase under cases where mutants did relatively worse among residents than among themselves for large populations. This was observed under $3$ game orderings, where fixation probabilities increased for small populations but eventually decreased to infinity. It was also noted in \cite{Taylor2004} that payoffs $b$ and $c$ are the only relevant entries of the payoff matrix under $N=2$. Thus, under small populations, increasing population size means increasing the maximum frequency of individuals of the same type that are found in the population before fixation occurs (e.g. $0\%$ for $N=2$, $50\%$ for $N=3$, $80\%$ for $N=6$). Increasing population size thus increases the frequency of interactions between equal types. Having high enough $a>\lambda-b$ and low enough $d<c$ allowed for fixation probabilities to increase for the smallest population sizes, even when $d/b>c/a$.

The confluence of all these effects in the $12$ remaining orderings resulted in a high diversity of shapes of single-mutant fixation probability functions, from functions with one minimum (DUP), to always increasing functions (UP), initially increasing but asymptotically decreasing ones (UDP and UD0), and functions with two extremes (DUDP and DUD0). We tried to understand under which orderings, and why, these function shapes emerged.

Anti-coordination games, such as Hawk-Dove \cite{MaynardSmith1973,BroomRychtar2013} (also called Snowdrift \cite{HauertDoebeli2004,DoebeliHauert2005}) were shown to hold a wide variety of fixation probability function shapes. One of the most striking observations was the pervasiveness of functions with one minimum, a shape already noted in \cite{BroomHadji2010Curve} under the fixation of dove strategy. In four orderings, this was the only shape observed when asymptotic values were positive, even for arbitrarily low values of intensity of selection. While it was shown in \cite{SampleAllen2017} that fixation probabilities decrease with population size when the weak selection limit is dominant over the large population limit, our results suggest that if the considered limit order is the reverse, weak selection does not necessarily erase these non-decreasing shapes. Additionally, when approaching the limit where fixation functions change from being asymptotically null to having this shape with a global minimum, all anti-coordination game orderings showed functions with two extremes. Finally, two of these orderings also showed monotonically increasing shapes when $b$ was sufficiently large, i.e. when mutants benefited the most from interacting with residents. 

The six game orderings associated with the fixation of dominating strategies \cite{Taylor2004} (also named unbeatable \cite{Hamilton1967,Nowak2004}), such as defectors under the Prisoner's Dilemma \cite{Axelrod1984,Poundstone1992}, were observed to hold monotonically decreasing functions, with one complete ordering and subsets of some of the remaining being comprehended in the conditions of Theorems 1 and 2. However, they were also shown to have a wide variety of other function shapes. Functions with one global minimum were observed under five out of the six orderings. Under three of these orderings, we observed monotonically increasing functions. Two of them additionally showed alternative functions keeping an initial increase but decreasing for larger population sizes, and also functions with two extremes when transitioning between those with a global minimum and the previous shape. On the other side, the fixation of dominated strategies, such as cooperation under the same games, was proved or tested to systematically hold monotonically decreasing functions.

In the context of coordination games, we have proved that under five out of the six possible orderings, fixation functions were always monotonically decreasing. However, the fixation of stag hunters (the reward-dominant strategy under the Stag Hunt game \cite{Skyrms2001,Skyrms2004}) was observed to have exceptional results in this context. If the reward for cooperation $a$ is large enough, we observe initially increasing functions that then tend to zero thus holding a maximum. This shows that there is an optimal population size for the fixation of a single stag hunter in those cases. Functions with two extremes were observed under these games, when transitioning between the monotonically decreasing functions and the ones with a global maximum, leading to fixation probabilities holding similar values for a wide range of population sizes.

There are numerous relevant borderline cases where finite populations might hold relevant differences. One example leading to particularly different results in finite populations is the transition through $I=0$ (see \cite{SampleAllen2017}), which in section \ref{sec:fixprob_two_extremes} was shown to lead to functions with two extremes. However, the orderings were explored assuming the entries in the payoff matrix to be different pairwisely. This ignores potentially interesting equal-payoff cases, such as the one with $a=c$ and $b=d$ under which there is neutral selection for infinitely large populations but not for finite ones, leading to size-dependent equilibria \cite{Taylor2004}.

We have verified that considering 
the evolutionary dynamics rising from the pairwise comparison \cite{Traulsen2006} can lead to all eight fixation function shapes reported here. Additionally, considering multiplayer social dilemmas in well-mixed populations would allow transition probability ratios $\gamma_k^N$ to have a non-monotonic dependence on $k$ \cite{Wu2013MultiplayerGames,Broom2019SocialDilemmas}, which could lead to the emergence of new fixation probability functions of population size. Finally, fixation probabilities have been studied in the context of structured populations as well, often holding distinctive dependencies on population size \cite{BroomHadji2010Curve,Hadji2011StarDynamics,Hindersin2016FixationGraphs,Allen2021FixationGraphWeakSelection}. Expanding our analysis to structured populations could hold new effects and lead to a more general theory of fixation probability functions.

\appendix

\section{\label{appendix:weak_selection}The Effect of the Weak Selection Limit}

In section \ref{sec:fixprob_min}, it was observed that anti-coordination games held fixation probabilities which increased for population sizes above a turning point $N_{min}(w)$ (see figure \ref{fig:fixprob_DUP_nw}). Figure \ref{fig:nw_nmin_w} suggests that turning points are inversely proportional to the intensity of selection $N_{min}(w)\sim 1/w$, which means that they would become increasingly large under the weak selection limit. Under this limit, it could be argued that the increase in the population sizes for which fixation probabilities start to increase would lead to a loss of significance of fixation processes due to most of the evolutionary time being spent in transient/mixed states \cite{AntalScheuring2006,Vasconcelos2017Hierarchic}, also called quasi-stationary states \cite{Zhou2010,Overton2022QuasiStationarySIS,Nasell1999QuasiStationary1,Nasell1999QuasiStationary2}.

    \begin{figure}[ht]
    \centering
    
    \begin{subfigure}[b]{0.48\textwidth}
    \centering
    \includegraphics[width=\textwidth]{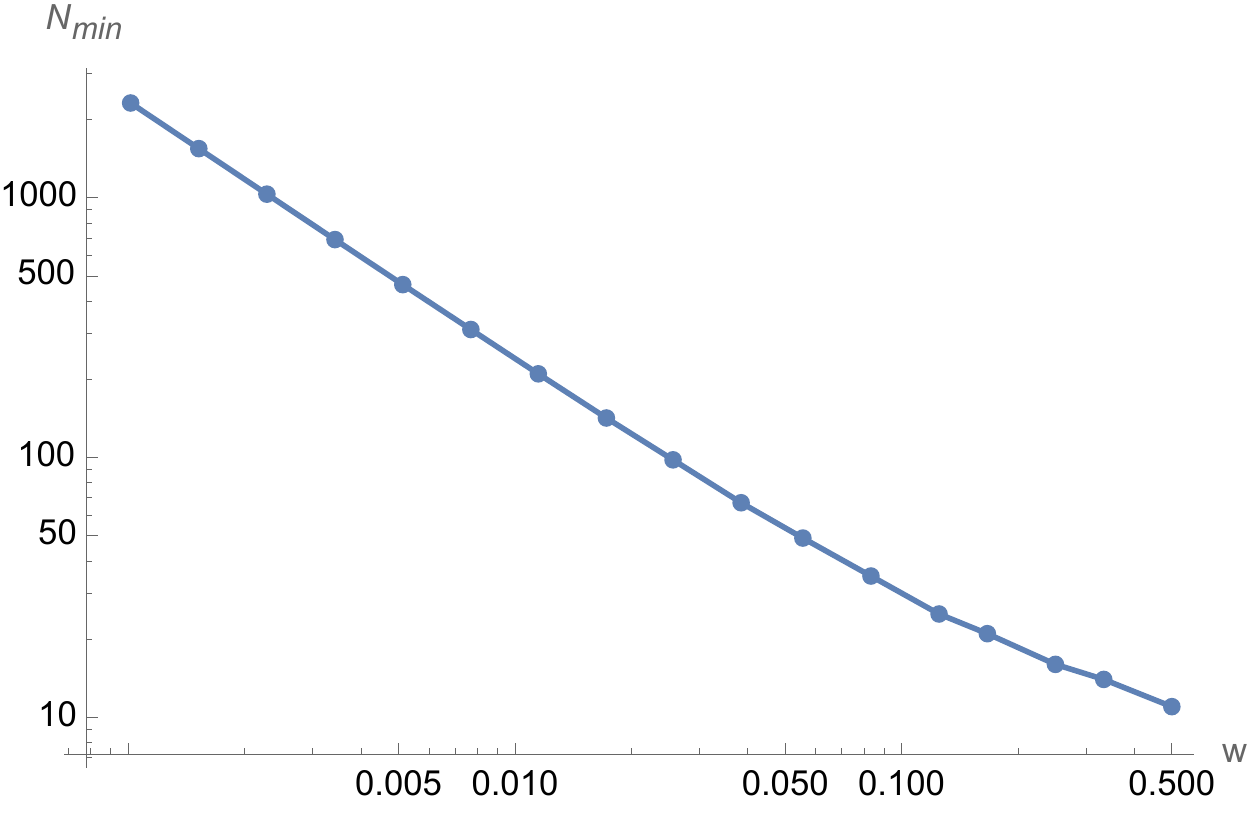}
    \caption{\label{fig:nw_nmin_w}}
    \end{subfigure}
    \begin{subfigure}[b]{0.51\textwidth}
    \centering
    \includegraphics[width=\textwidth]{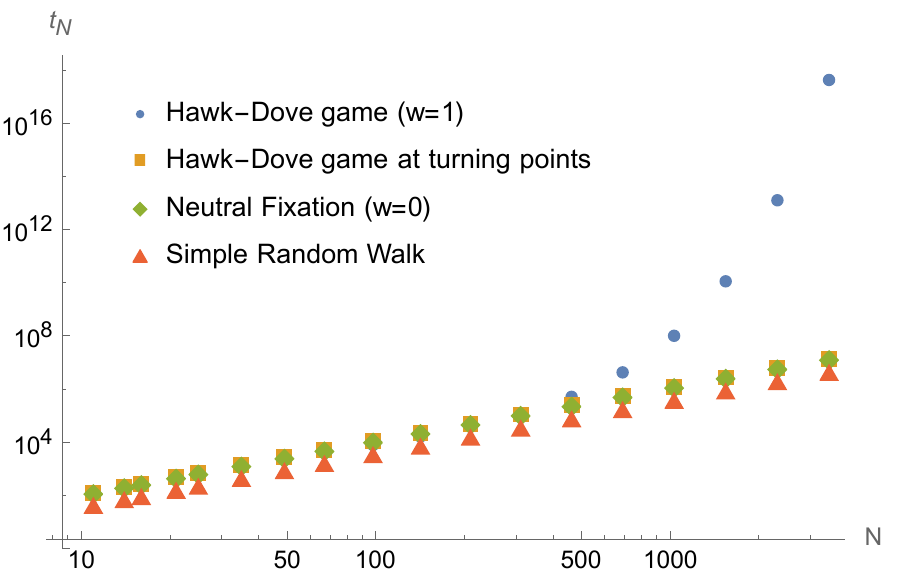}
    \caption{\label{fig:nw_times}}
    \end{subfigure}
    
    \caption[Varying]{\label{fig:nw}Fixation process under game with payoff values $[5.5,5,6,3]$, corresponding to dove fixation in the Hawk-Dove game (HD). Figure \ref{fig:nw_nmin_w} exhibits the turning point $N_{min}$ as a function of the intensity of selection $w$, showing that they have an approximately inversely proportional relation. In figure \ref{fig:nw_times}, the average conditional fixation times $t_N$ were shown for (1) the Hawk-Dove game with the maximum intensity of selection ($w=1$ and independent $N$), (2) the Hawk-Dove game at the turning points of figure \ref{fig:nw_nmin_w} ($w\in(0,1)$ and $N=N_{min}(w)$), (3) the neutral fixation scenario ($w=0$ and independent $N$), and (4) the simple (symmetric) random walk.}
    \end{figure}

The average conditional fixation time is the average number of discrete steps that the population takes to go from state $i=1$ to state $i=N$ (i.e. the fixation of a single mutant) conditional on it happening:
\begin{equation}
    \label{eq:average_conditional_fixation_time_general}
    t_N=\dfrac{\sum_{t=0}^{\infty}t\phi_N(t)}{\sum_{t=0}^{\infty}\phi_N(t)},
\end{equation}
where $\phi_N(t)$ is the probability that the population gets from state $i=1$ to state $i=N$ after $t$ discrete steps. It should be clear from this definition that $\sum_{t=0}^{\infty}\phi_N(t)=\rho_N$. Following \cite{AntalScheuring2006}, we can obtain a recursive relation leading to the following expression (see \cite{DellaRossa2017Unifying,Huang2018} for alternative expressions):
\begin{equation}
    \label{eq:average_conditional_fixation_time}
    t_N=\sum_{n=1}^{N-1}\dfrac{s_{0,n-1}s_{n,N-1}}{P_{n^+}^Nq_ns_{0,N-1}}, \hspace{12pt} s_{n,m}=\sum_{k=n}^{m}q_k, \hspace{12pt} q_n=\prod_{j=1}^{n}\gamma_j^N.
\end{equation}

We are interested in comparing the average time obtained in the turning points $N_{min}(w)$ of figures \ref{fig:fixprob_DUP_nw} and \ref{fig:nw_nmin_w} with three different scenarios: the simple (symmetric) random walk, the neutral fixation case and the same game for maximum intensity of selection $w=1$.

The simple (symmetric) random walk is obtained by considering the stochastic process under which the transition between a state $i$ and one of its two neighbours is one-half: $P_{i^+}=P_{i^-}=1/2$. The average conditional fixation time under this process is
\begin{equation}
\label{eq:average_conditional_fixation_time_random_walk}
    t_N=\dfrac{1}{3}(N^2-1).
\end{equation}

On the other hand, under the frequency-dependent Moran process the transition probabilities $P_{i^+}$ and $P_{i^-}$ do not add up to $1$. This is so because at each time step there is a probability that the population will remain in the same state $P_{i^=}$. Thus, when we are under neutral selection ($w=0$), equation \ref{eq:average_conditional_fixation_time} leads to a different equation than \ref{eq:average_conditional_fixation_time_random_walk}, which is three times slower than the simple random walk
\begin{equation}
\label{eq:average_conditional_fixation_time_neutral_fixation}
    t_N=N(N-1).
\end{equation}

By analysing fixation processes under $2\times2$ games, \cite{AntalScheuring2006} concluded that anti-coordination games have average conditional fixation times that grow exponentially with population size for asymptotically large populations. The base of exponential growth $\tau$ can be calculated from the payoff matrix. The average conditional fixation time thus becomes
\begin{equation}
\label{eq:average_conditional_fixation_time_mutual_invasion}
    t_N\sim \tau(a,b,c,d)^N,
\end{equation}
where $\tau(a,b,c,d)>1$ for any choice of $a$, $b$, $c$, and $d$ leading to an anti-coordination game with $I<0$.

In figure \ref{fig:nw_times}, we observe that as intensity of selection decreases, and turning points increase accordingly $N_{min}(w)\to\infty$, fixation times under this anti-coordination game keep on being of the order of the neutral ones $t_N \sim N^2$ instead of increasing exponentially with $N$. As noted in \cite{SampleAllen2017}, considering limits $N\to \infty$ and $w\to 0$ in different orders may lead to different fixation outcomes. Here we present a case where it is needed to clarify that relation in order to understand what happens asymptotically.

\section*{Acknowledgements}

This project has received funding from the European Union's Horizon 2020 research and innovation programme under the
Marie Skłodowska-Curie grant agreement No 955708.

\bibliography{references}


\end{document}